\documentclass[12pt]{amsart}

\setlength{\parskip}{6pt}
\setlength{\parindent}{0pt}

\usepackage{tgpagella}
\usepackage{euler}
\usepackage[T1]{fontenc}
\usepackage{amsmath, amssymb}
\usepackage[hidelinks]{hyperref}
\usepackage[english]{babel}
\usepackage{mathrsfs}
\usepackage{eucal}
\usepackage[all]{xy}
\usepackage{tikz}

\newtheorem{thm}{Theorem}
\newtheorem*{thm*}{Theorem}
\newtheorem{lem}[thm]{Lemma}
\newtheorem{fact}[thm]{Fact}

\newtheorem*{prop*}{Proposition}

\newtheorem*{cor*}{Corollary}

\theoremstyle{definition}

\newtheorem*{defn*}{Definition}

\newtheorem{remark}[thm]{Remark}

\newtheorem*{question*}{Question}
\newtheorem*{Pquestion*}{Popa's question}

\newtheorem*{conv*}{Convention}

\def\bb{\mathbb}

\def\bb{\mathbb}

\makeatletter

\def\dotminussym#1#2{%
  \setbox0=\hbox{$\m@th#1-$}%
  \kern.5\wd0%
  \hbox to 0pt{\hss\hbox{$\m@th#1-$}\hss}%
  \raise.6\ht0\hbox to 0pt{\hss$\m@th#1.$\hss}%
  \kern.5\wd0}

%\DeclareMathOperator

%Scott's commands

%\newcommand{\cstar}{$\mathrm{C}^*$}

\def \st{\operatorname{st}}

\def \starR{{}^*\mathbb R}
\def \starA{{}^*A}
\def \starB{{}^*B}
\def \starL{{}^*L}
\def \starC{{}^*\mathbb C}

\def \starpsi{{}^*\psi}

\textwidth 5.75in
\oddsidemargin 0.375in
\evensidemargin 0.375in

%%%%%%%%%%%%%%%%%%%%%%%%%%%%%%%%%%%%%%%%%%%%%%

\begin{document}

%%%%%%%%%%%%%%%%%%%%%%%%%%%%%%%%%%%%%%%%%%%%%%

\title{A Nonstandard Formulation of Bohmian Mechanics}
\author{Jeffrey Barrett and Isaac Goldbring}
\thanks{Goldbring was partially supported by NSF grant DMS-2054477.}

\address{Department of Logic and Philosophy of Science\\University of California, Irvine, 765 Social Sciences Tower,
Irvine, CA 92697-5000}
\address{Department of Mathematics\\University of California, Irvine, 340 Rowland Hall (Bldg.\# 400),
Irvine, CA 92697-3875}
\email{j.barrett@uci.edu}
\urladdr{
https://faculty.sites.uci.edu/jeffreybarrett/}
\email{isaac@math.uci.edu}
\urladdr{http://www.math.uci.edu/~isaac}

\maketitle

\begin{abstract}
Using the tools of nonstandard analysis, we develop and present an alternative formulation of Bohmian mechanics. This approach allows one to describe a broader assortment of physical systems than the standard formulation of the theory. It also allows one to make predictions in more situations. We motivate the nonstandard formulation with a Bohmian example system that exhibits behavior akin to Earman's (1986) classical space invaders and reverse space invaders. We then use the example to illustrate how the alternative formulation of Bohmian mechanics works.
\end{abstract}

Bohmian mechanics is the best known and most thoroughly studied hidden-variable formulation of quantum mechanics. The theory was first presented by David Bohm (1952), then subsequently sharpened by John Bell (1987). We will begin with a brief discussion of the theory. We will then describe a physical situation for which the theory predicts a strong failure of determinism akin to John Earman's (1986) space invaders in classical mechanics. Finally, we will show how a nonstandard formulation of Bohmian mechanics allows one to describe such physical situations and to make clear predictions regarding how the state evolves.

\section{Bohmian mechanics}

Following Bell's (1987) formulation of the theory, Bohmian mechanics can be characterized by four rules:

\begin{itemize}
    
\item [1.] \emph{Representation of states}: The complete physical state of a system~$S$ at time~$t$ is given by the wave function $\psi(q,t)$ over configuration space and a point in configuration space~$Q(t)$.

\item [2.] \emph{Interpretation of states}: The position of every particle is always determinate and is given by the current configuration~$Q(t)$.

\item [3I.] \emph{Linear dynamics}: The wave function evolves in the standard unitary way
$$
i \hbar \frac{\partial \psi(q,t)}{\partial t} =  \hat{H} \psi(q,t)
$$
where $\hat{H}$ is the Hamiltonian.

\item [3II.]  \emph{Particle dynamics}: Particles move according to
$$
\frac{d Q_k(t)}{dt} = \frac{1}{m_k} \frac{\mbox{Im }\psi^*(q,t) \nabla_k \psi(q,t)}{\psi^*(q,t)\psi(q,t)}
\Big|_{Q(t)}
$$
where~$m_k$ is the mass of particle~$k$ and~$Q(t)$ is the current particle configuration.

\item [4.]  \emph{Distribution postulate}: The probability density of the configuration~$Q(t_0)$ is $|\psi(q, t_0)|^2$ at an initial time~$t_0$.

\end{itemize}

Both the wave function $\psi(q,t)$ and the particle configuration $Q(t)$ evolve in $3N$-dimensional configuration space, where $N$ is the number of particles in the system one wishes to describe. The $3N$-coordinates of the configuration $Q(t)$ give the position of each particle at time $t$. One can think of the probability density $|\psi(q, t)|^2$ as describing the density of a compressible fluid in configuration space. The wave function evolves according to the linear dynamics (rule~3I), and as the compressible fluid flows about in configuration space, it carries the point representing the particle configuration $Q(t)$ as described by the particle dynamics (rule~3II). As a result, the configuration moves in configuration space as if it were a massless particle carried by the probability current.

In contrast with collapse theories like the standard von Neumann (1932) theory and GRW (1986), the dynamics in Bohmian mechanics is both deterministic and time-reversal symmetric for a broad range of simple physical systems. If one knows the wave function at time $t$ and the Hamiltonian $\hat{H}$, then rule~3I determines the wave function at all future and past times. And in many situations if one knows how the wave function evolves and the particle configuration $Q(t)$ at a time, then rule~3II determines the positions of the particles at all future and past times. But there are some situations where the theory is less well behaved. Before considering what can go wrong, it is important to understand how probability works in the theory.

Quantum probabilities are purely epistemic in Bohmian mechanics. They result from a physical observer not knowing the initial particle configuration. The dynamics has the property that if the epistemic probability density for the particle configuration is ever given by the standard epistemic quantum probabilities $|\psi(q, t)|^2$, then it will continue to be until one makes a measurement. After a measurement, it will be given by the effective wave function, a notion introduced and discussed by D\"urr, Goldstein, and Zangh\`i (1992).\footnote{One gets empirical predictions by supposing that measurement records supervene on the effective wave function. See D\"urr, Goldstein, and Zangh\`i (1992) and Barrett (2000), (2020), (2021) for discussions of how this works.} The distribution postulate (rule~4) stipulates that the epistemic probability density for the particle configuration at time~$t_0$ is $|\psi(q, t_0)|^2$. It is this statistical boundary condition together with how the dynamics works that yields the standard quantum probabilities as epistemic probabilities over particle configurations.

Rule~4 is crucial to the empirical adequacy of the theory. If particles were not distributed in this way, the theory would not predict the standard quantum probabilities. And if an observer ever knew the particle configuration with more precision than allowed by the standard quantum probabilities, she would be able to predict the results of her future measurements more precisely than allowed by the standard quantum probabilities.

Some proponents of Bohmian mechanics do not like having to assume a special boundary condition like rule~4 as a part of the theory. As a result, there is a long tradition of seeking to derive something like rule~4 from the Bohmian dynamics and some collection of general epistemic principles.\footnote{D\"urr, Goldstein, and Zangh\`i (1992) and their subsequent work on justifying the assumption of an equivariant epistemic distribution is a salient example of this tradition. Such approaches often appeal to a version of Cournot’s principle to argue from the fact that some property is typically true relative to a specified measure of typicality to the conclusion that the property can be assumed to hold. See Diaconis and Skyrms (2018, 66-7) for a brief discussion of Cournot’s principle.} On this approach, one treats the wave function and configuration as independent then argues that one can expect rule~4 to be eventually satisfied under the dynamics.

Allowing the wave function and particle configuration to be independent provides a broader range of physical possibilities, but it also allows for physical situations where the behavior of a system is radically underdetermined by its state. We will briefly discuss determinism in classical mechanics then consider what can happen in a Bohmian system when the wave function and particle position are logically independent.

\section{Classical invaders}

In his discussion of determinism in classical mechanics, John Earman showed how a particle might move beyond every spatial location in a finite time and how a particle might move to any specified spatial region without having been at any spatial location a finite time earlier (1986, 34--5). Following Mather and MaGehee (1975), Earman further showed how each phenomenon might occur as the result of physically possible forces (1986, 35--7). A simple example suffices to illustrate the idea.

Consider a one-dimensional universe containing a single particle that starts at a location $x_0$, as specified by an inertial observer, then moves one meter to the right in $1/2$ second, another meter to the right in the next $1/4$ second, another meter to the right in the next $1/8$ second, and so on. After $1$ second the particle will be beyond any specified location to the right and hence not spatially located at all. This phenomenon is sometimes called a \emph{reverse space invader}. And inasmuch as classical mechanics is time-reversal symmetric, this process is reversible. In that case, a particle that is not at any spatial location one second ago moves to location $x_0$, the spatial part of its trajectory mirroring that of a reverse space invader. This is a \emph{space invader}. Reverse space invaders pose a problem for principles like the conservation of mass, energy, momentum, and charge; space invaders pose a direct problem for determinism.

Classical mechanics allows that a space invading particle might appear come flying in at any time disturbing the predicted behavior of an otherwise deterministic system. While one might seek to recover determinism by imposing constraints on classical mechanics that would eliminate such phenomena, Earman showed that finding plausible physical constraints that do so is more difficult than one might at first imagine (1986, 37--53). He concluded:
\begin{quote}
Newtonian space-time, whose structure is rich enough to support the possibility of determinism in classical worlds, nevertheless proves to be a none too friendly environment. The principle irritant derives from the possibility of arbitrary fast causal signals, threatening to trivialize domains of dependence. (1986, 52)
\end{quote}
As we shall see, at least part of the problem regarding domains of dependence results from classical mechanics's use of the reals $\mathbb{R}$ to model the spacetime continuum. While we will focus here on Bohmian mechanics, on moral of the story is that the hyperreals ${}^*\mathbb R$ provide a natural model for the continuum that is particularly well suited to handling classical invader-type phenomena.\footnote{See Wontner (2019) for an argument that the hyperreals may prove a better candidate than the reals for representing continuous phenomena more generally. He argues, in particular, that they more naturally capture the ``gap-free'' nature of continuity.}

\section{Bohmian invaders}
\label{invader}

Invader-type phenomena occur in Bohmian mechanics in simple physical situations. Consider a one-dimensional universe containing a single particle and no fields. Since there is only one particle, the configuration space is one-dimensional position space. Suppose at time $t_0$ the wave function $\psi(x,t_0)$ begins entirely within the interval $[0,1]$
$$
\int_0^1 |\psi(x,t_0)|^2 dx =1
$$
and the particle begins at position $x=2$. By rule~3I, for any time $t>t_0$
$$
\int_k^\infty |\psi(x,t)|^2 dx > 0
$$
for any $k$. This results from the fact that $\psi(x,t_0)$ has positive support only in the interval $[0,1]$ in position space and hence its Fourier transform is nonzero for arbitrarily large momenta.

By rule~3II, the particle will be carried by the probability current to the right. Since the particle's motion is perfectly sensitive to the probability flow (see Fact 4 below), the integral of $|\psi(x, t)|^2$ to the right of the particle's position $x(t)$ must always be zero. This requires that the particle be to the right of every finite position which means that it is no longer at any spatial location.

This also means that if one insists that the dynamics is deterministic and hence time-reversal symmetric, then the universe may start with no particle at all at some time $t_0$ then contain a particle at some later time $t_1$ at position $x=2$. But in the present case it is difficult to say much regarding how such a space invader story is supposed to go. Unlike the classical space invader we considered earlier, here the particle would need to suddenly appear at $x=2$ without our being able to say how it got there or why it moved as it did on even the spatial portion of its trajectory.

One might seek to regain determinism by disallowing some wave functions, placing constraints on the relationship between the wave function and the particle configuration, and/or limiting the theory to a class of well-behaved Hamiltonians. But such strategies come at a cost. In each case, one is placing significant restrictions the sort of physical stories one might tell in the context of the theory.

Our approach is to use the tools of nonstandard analysis to provide a more robust characterization of the theory. This will not settle all issues regarding determinism in Bohmian mechanics, but it will allow one to represent more physical possibilities and to consider how the associated systems might evolve. Specifically, the descriptive freedom afforded by the nonstandard model will allow one to tell perfectly coherent deterministic space-invader type stories.

\section{Nonstandard analysis in a nutshell}

In this section, we briefly describe the basic premise behind nonstandard analysis.  A more detailed account can be found in Barrett and Goldbring (2021).

The starting point for nonstandard analysis is the passage to a \textbf{nonstandard extension} $\starR$ of the usual (standard) field $\bb R$ of real numbers; ${}^*\bb R$ is sometimes referred to as the field of \textbf{hyperreal numbers}.\footnote{It is important to point out that the field ${}^*\bb R$ is not in general unique.  For our purposes, it will suffice to simply pick one such field in which we work.} We assume that ${}^*\bb R$ is a proper ordered field extension of $\bb R$, and thus contains nonzero \textbf{infinitesimal elements}, that is, elements $\epsilon\not=0$ such that $|\epsilon|$ is smaller than any standard positive real number.\footnote{It follows that $\starR$ is a nonarchimedean field.}  The reciprocal of a nonzero infinitesimal number is thus an \textbf{infinite element} of $^*{\bb R}$ in that it is larger in absolute value than any standard real number.  Elements of $^*{\bb R}$ that are not infinite are (appropriately) called \textbf{finite}.  In particular, all standard real numbers are finite. 

If two elements $x$ and $y$ in $\starR$ differ by an infinitesimal (including the case that $x=y$), then we say that $x$ and $y$ are \textbf{infinitely close} and denote this by $x\approx y$.  The completeness of $\bb R$ guarantees that any finite element $x$ of $\starR$ is infinitely close to a unique standard real number, called the \textbf{standard part} of $x$, denoted $\st(x)$.

What sets the hyperreal fields apart from other nonarchimedean field extensions of $\bb R$ is that it behaves ``logically'' like $\bb R$.  More precisely, we also extend each subset $A\subseteq \bb R$ to its nonstandard extension $\starA\subseteq \starR$ in a functorial manner, meaning that any function $f:A\to B$ between subsets of $\bb R$ is also extended to a function ${}^*f:\starA\to \starB$.  Most importantly, any property that can be formulated in first-order terms is true in $\bb R$ if and only if it is true in $\starR$ when applied to the nonstandard extensions of the objects mentioned in the property; this fundamental principle, called the \textbf{transfer principle}, is at the heart of nonstandard methods.  For example, the trigonometric identity $\sin(x+y)=\sin x\cos y+\cos x\sin y$, which holds for all $x,y\in \bb R$, continues to hold for all $x,y\in \starR$, where the addition and multiplication are now those of $\starR$ and where we consider the extensions ${}^*\sin$ and ${}^*\cos$.

In order to be of real use in applications (and for our work in particular), one needs to carry out this extension process to all sets one might work with in the course of their investigations.  For us, that includes considering the \textbf{hypercomplex numbers} $\starC$ as well as the nonstandard extension $\starL^2(\bb R)$ of the usual Hilbert space $L^2(\bb R)$ of square-integrable (with respect to Lebesgue measure) functions $\bb R\to \bb C$.  By the transfer principle, elements of $\starL^2(\bb R)$ are \textbf{internal functions}\footnote{Here, an internal function is an element of the nonstandard extension of the set of functions $\bb R\to \bb C$ and, after some normalization procedure, can be viewed as an actual function $\starR\to \starC$.} $\starR\to \starC$ whose square have finite \textbf{internal Lebesgue integral}.  To make sense out of this, one can view the integral as a function from the set of nonnegative measurable functions $\bb R\to \bb R^{\geq 0}$ to the set $[0,\infty]:=\bb R^{\geq 0}\cup \{\infty\}$ which then has a nonstandard extension defined on the set of nonnegative \textbf{internally measurable}\footnote{At this point, we leave it to the reader to determine what internally measurable means.} functions $\starR\to \starR^{\geq 0}$ to $[0,\infty]^*:=\starR^{\geq 0}\cup\{\infty\}$.  By transfer, many of the standard manipulations when working with the Lebesgue integral remain valid when working with the internal Lebesgue integral.  The extension $L^2(\bb R)\subseteq \starL^2(\bb R)$ identifies an element $f:\bb R\to \bb C$ with its nonstandard extension ${}^*f:\starR\to \starC$ and the integral of $|f|^2$ equals the internal integral of ${}^*|f|^2$.  Nothing in this discussion is particular to the case of Lebesgue measure on $\bb R$ and works equally well with respect to integration on other measure spaces.

\section{A nonstandard formulation of Bohmian mechanics}

We can now characterize our nonstandard formulation of Bohmian mechanics.  The nonstandard formulation consists of the obvious internal versions of the usual posulates of Bohmian mechanics enumerated in the first section of this paper, together with one new axiom.  We begin with the former: 

\begin{itemize}
    
\item [${}^*$1.] \emph{Representation of states}: The complete physical state of a system~$S$ at time~$t$ is given by the wave function $\psi(q,t)$ over configuration space and a point in configuration space~$Q(t)$.  Now, times are hyperreal, that is, $t\in \starR$, and configuration space is $\starR^{3N}$.  Moreover, for each $t\in \starR$, $\psi(q,t)$ is now an element of $\starL^2(\bb R^{3N})$ (in particular, $q\in \starR^{3N}$), $|\psi|^2$ has internal integral equal to $1$, and the actual configuration $Q(t)$ of the particles is an element of $\starR^{3N}$.

\item [${}^*$2.] \emph{Interpretation of states}: The position of every particle is always determinate and is given by the current configuration~$Q(t)$.  Notice, nothing has changed in this postulate other than the configuration space changing from $\bb R^{3N}$ to $\starR^{3N}$.

\item [${}^*$3I.] \emph{Linear dynamics}: The wave function evolves in the standard unitary way
$$
i \hbar \frac{\partial \psi(q,t)}{\partial t} =  \hat{H} \psi(q,t)
$$
where $\hat{H}$ is the Hamiltonian.  Now, the Hamiltonian is an internal function and the above equation is to be construed as an internal partial differential equation.

\item [${}^*$3II.]  \emph{Particle dynamics}: Particles move according to
$$
\frac{d Q_k(t)}{dt} = \frac{1}{m_k} \frac{\mbox{Im }\psi^*(q,t) \nabla_k \psi(q,t)}{\psi^*(q,t)\psi(q,t)}
\Big|_{Q(t)}
$$
where~$m_k$ is the mass of particle~$k$ and~$Q(t)$ is the current particle configuration.  As in the previous postulate, the above guiding equation is now an internal ordinary differential equation.

\item [${}^*$4.]  \emph{Distribution postulate}: The probability density of the configuration~$Q(t_0)$ is $|\psi(q, t_0)|^2$ at an initial time~$t_0$.  More precisely, the probability of the configuration belonging to some internally Borel set $E\subseteq \starR^{3N}$ at an initial time $t_0$ is given by the internal integral $\int_E |\psi|^2dt:=\int_E \chi_E|\psi|^2dt$.  Note that such a probability might be infinitesimal.

\end{itemize}

New to our nonstandard formulation of Bohmian mechanics is the following postulate:

\begin{itemize}
    \item [5.]  \emph{Full wave function support}:  If the state of the system at the initial time $t_0$ is given by $\psi(q,t_0)$ and $q\in {}^*\mathbb R^{3N}$ is a \emph{physically possible} initial configuration of the system, then there is an internally open neighborhood $U$ of $q$ such that $\psi(q',t_0)\not=0$ for all $q'\in U$.
\end{itemize}

Postulate~5 is motivated by our consideration of invaders in Section~\ref{invader}. The theory allows for wave functions that provide only infinitesimal support in entire regions of configuration space, and this postulate allows for any particle configuration no matter what the wave function. In the case of the reverse space invader example, the particle will begin in a region of infinitesimal rather than zero probability. According to the distribution postulate~${}^*$4, such a state is extraordinarily improbable but not physically impossible.

In the sequel, we will make the simplifying assumption that every point in configuration space is physically possible.  The mathematics that follows can easily be adapted to deal with an arbitrary internally measurable set of physically possible configurations.

Call a (standard) Hamiltonian $\hat H$ \textbf{reasonable} if:  for any solution $\psi(q,t)$ of the Schr\"odinger equation and any initial particle configuration $Q(t_0)$ that lies in the support of the wavefunction $\psi(q,t_0)$, we have that $Q(t)$ lies in the support of $\psi(q,t)$ for all $t\in \bb R$.

It follows that if $\hat H$ is an internally reasonable Hamiltonian and the initial wave function $\psi(q,t_0)$ satisfies postulate 5 above, then $\psi(q,t)$ satisfies postulate 5 for all $t\in \starR$.

\section{Infinitesimal perturbations of standard wave functions}

In this section, we want to show how to perturb any standard wave function to an internal wave function with full support in such a way that epistemic probabilities corresponding to any standard Borel set are only changed by an infinitesimal amount.

We begin with some elementary standard measure-theoretic preliminaries.  Given a measure space $X$, we let $L^2(X)_1$ denote the unit sphere of the Hilbert space $L^2(X)$.

\begin{lem}\label{perturb}
For every $\psi\in L^2(X)_1$ and every $\epsilon>0$, there is $\tilde\psi\in L^2(X)_1$ with the following properties:
\begin{enumerate}
    \item $\|\psi-\tilde\psi\|_2<\epsilon$.
    %\item for all measurable $F\subseteq X$, $|\int_F |\psi|^2-\int_
    %F|\tilde\psi|^2|<\epsilon$,
    \item $\tilde\psi(x)\not=0$ for all $x\in X$.
\end{enumerate}
\end{lem}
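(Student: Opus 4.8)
The plan is to construct $\tilde\psi$ by adding a small, everywhere-nonzero correction to $\psi$ and then renormalizing. The key observation is that the set of functions in $L^2(X)$ that vanish on a positive-measure set is, in a suitable sense, nowhere dense, so a generic small perturbation will be nonvanishing; but rather than invoke genericity I would prefer an explicit construction. First I would fix a reference function $g \in L^2(X)$ with $g(x) \neq 0$ for all $x \in X$ and $\|g\|_2 = 1$; such a $g$ exists because $X$ is $\sigma$-finite (or can be reduced to the $\sigma$-finite case, since any $\psi \in L^2(X)$ is supported on a $\sigma$-finite set, and off that set we may define $\tilde\psi$ using a nonvanishing $L^2$ function on the remainder). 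Concretely, on a $\sigma$-finite piece one can take $g$ to be a strictly positive function like a normalized exponential weight adapted to an exhausting sequence of finite-measure sets.

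The construction is then to set $\tilde\psi := \dfrac{\psi + \delta g}{\|\psi + \delta g\|_2}$ for a sufficiently small $\delta > 0$ to be chosen. The numerator $\psi + \delta g$ need not be everywhere nonzero, however, since $\psi(x)$ and $\delta g(x)$ could cancel. To rule this out I would instead use a complex twist: take $g$ nonvanishing as above and consider $\psi + \delta e^{i\theta} g$, arguing that for all but at most countably many choices of phase $\theta$ (in fact, after fixing $\delta$, the cancellation $\psi(x) = -\delta e^{i\theta}g(x)$ can occur on a positive-measure set for at most countably many $\theta$, since these sets are pairwise disjoint up to measure zero), the numerator is nonzero almost everywhere. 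Alternatively, and more cleanly, one can avoid phases by choosing $g$ so that $\psi + \delta g$ is manifestly nonvanishing --- for instance by building $g$ to dominate $\psi$ pointwise where convenient --- but the phase-rotation argument is robust and avoids case analysis on the magnitude of $\psi$.

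The step I expect to be the main obstacle is reconciling ``nonzero almost everywhere'' with the literal requirement ``$\tilde\psi(x)\neq 0$ \emph{for all} $x\in X$'' in item (2) of the statement. Since elements of $L^2(X)$ are equivalence classes modulo null sets, the phrase must be read as asserting the existence of a representative that is everywhere nonzero, and the cancellation locus $\{x : \psi(x) + \delta e^{i\theta}g(x) = 0\}$ can be made null but perhaps not empty for a fixed representative of $\psi$. The fix is to first replace $\psi$ by a representative, then redefine $\tilde\psi$ to equal $\delta g$ (which is everywhere nonzero) on the exceptional null set and to equal the perturbed function elsewhere; this alters $\tilde\psi$ only on a null set, hence not its $L^2$-class, and produces a genuinely everywhere-nonzero representative.

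Finally I would verify the norm estimate in item (1). By the triangle inequality $\|\psi - \tilde\psi\|_2 \leq \|\psi - (\psi + \delta e^{i\theta}g)\|_2 + \big|\,1 - \tfrac{1}{\|\psi + \delta e^{i\theta}g\|_2}\big|\,\|\psi + \delta e^{i\theta}g\|_2$, and since $\|\psi + \delta e^{i\theta}g - \psi\|_2 = \delta$ while $\big|\|\psi + \delta e^{i\theta}g\|_2 - \|\psi\|_2\big| \leq \delta$ forces $\|\psi + \delta e^{i\theta}g\|_2 \to 1$ as $\delta \to 0$, the whole expression tends to $0$ with $\delta$. Hence choosing $\delta$ small enough yields $\|\psi - \tilde\psi\|_2 < \epsilon$, completing the argument.
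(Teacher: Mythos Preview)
Your argument is correct, but it takes a noticeably different route from the paper's. The paper localizes the correction to the zero set $E:=\{x:\psi(x)=0\}$: it picks $\theta\in L^2(X)$ supported exactly on $E$ with small $\delta:=\|\theta\|_2^2$, sets $\tilde\psi:=\sqrt{1-\delta}\,\psi+\theta$, and observes that on $E^c$ this is a nonzero multiple of $\psi$ while on $E$ it equals $\theta$, so cancellation is impossible by construction. Because $\psi$ and $\theta$ have disjoint supports they are orthogonal, which makes both the normalization $\|\tilde\psi\|_2=1$ and the distance $\|\psi-\tilde\psi\|_2^2=(1-\sqrt{1-\delta})^2+\delta$ exact identities rather than estimates. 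Your approach instead perturbs globally by $\delta e^{i\theta}g$ and then must argue that cancellation occurs only on a null set (via the disjoint-phase argument) and finally redefine on that null set; the resulting norm bound comes from the triangle inequality rather than orthogonality. Both reach the goal, but the paper's decomposition sidesteps the phase-selection and null-set repair entirely, at the cost of needing a nonvanishing $L^2$ function on $E$ (which, like your existence of $g$, tacitly uses $\sigma$-finiteness---a point you were right to flag, and which the paper leaves implicit since the intended application is $X=\mathbb R^{3N}$).
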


\begin{proof}
Let $E:=\{x\in X \ : \ \psi(x)=0\}$, a measurable set.  Let $\theta\in L^2(X)$ be such that $\theta(x)=0$ if and only if $x\in E^c$ and such that $\delta:=\int |\theta|^2$ is sufficiently small (to be determined momentarily).  Let $\tilde\psi:=\sqrt{(1-\delta)}\psi +\theta$.  Then 
$$\int |\tilde\psi|^2=\int_E |\tilde\psi|^2+\int _{E^c}|\tilde\psi|^2=\delta+(1-\delta)=1,$$ so $\tilde\psi\in L^2(X)_1$.  It is clear from construction that $\tilde\psi(x)\not=0$ for all $x\in X$.  Note also that, for any measurable set $F$, we have $\int_F \psi^*\theta=0$. Consequently, 
$$\|\psi-\tilde\psi\|_2=\int |(1-\sqrt{1-\delta})\psi+\theta|^2=(1-\sqrt{1-\delta})^2+\delta<\epsilon$$ 
%$$\left|\int_F|\psi|^2-\int_
   % F|\tilde\psi|^2\right|\leq\int \left||\psi|^2-|\tilde\psi|^2\right|=\int (\delta|\psi|^2+|\theta|^2)=2\delta.$$
if $\delta$ is sufficiently small.
\end{proof}

% Note that there is a lot of choice for $\tilde\psi$ mainly because there is a lot of choice for $\theta$ above.  We can require, for example, $\theta$ to be a Schwarz function, whence $\tilde\psi$ would be a Schwarz function if $\psi$ is also a Schwarz function.  Also, by choosing $\theta$ to be uniformly bounded by a small number, we can even make $\psi$ and $\tilde\psi$ close pointwise.

\begin{lem}
For any functions $\psi,\tilde\psi\in L^2(X)$ and any measurable subset $F\subseteq X$, we have $|\int_F |\psi|_2-\int_F|\tilde\psi|^2|\leq 2\|\psi-\tilde\psi\|_2.$
\end{lem}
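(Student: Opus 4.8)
The plan is to prove the inequality directly, treating the claimed constant $2$ as the target. The engine is the Cauchy--Schwarz inequality together with the telescoping identity
$$|\psi|^2-|\tilde\psi|^2=\psi\overline\psi-\tilde\psi\,\overline{\tilde\psi}=(\psi-\tilde\psi)\overline\psi+\tilde\psi\,\overline{(\psi-\tilde\psi)},$$
which rewrites the integrand so that each summand carries an explicit factor of $\psi-\tilde\psi$. This avoids any appeal to the reverse triangle inequality for $|\,\cdot\,|$ and keeps the two occurrences of $\psi$ and $\tilde\psi$ cleanly separated.

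First I would integrate this identity over $F$ and apply the triangle inequality, giving
$$\left|\int_F|\psi|^2-\int_F|\tilde\psi|^2\right|\leq\left|\int_F(\psi-\tilde\psi)\overline\psi\right|+\left|\int_F\tilde\psi\,\overline{(\psi-\tilde\psi)}\right|.$$
Next I would apply Cauchy--Schwarz to each term over $F$ and then enlarge the domain from $F$ to all of $X$ in each factor, which only increases the integral of a nonnegative function; this bounds the right-hand side by $\|\psi-\tilde\psi\|_2\,\|\psi\|_2+\|\tilde\psi\|_2\,\|\psi-\tilde\psi\|_2$. Finally, invoking the normalization $\|\psi\|_2=\|\tilde\psi\|_2=1$ in force throughout this section collapses this to exactly $2\|\psi-\tilde\psi\|_2$, as claimed.

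The one genuine obstacle is the origin of the constant $2$: the argument above produces the factor $\|\psi\|_2+\|\tilde\psi\|_2$, and this equals $2$ only when $\psi$ and $\tilde\psi$ lie on the unit sphere $L^2(X)_1$. That normalization is exactly the standing hypothesis here---the perturbed wave functions built in Lemma~\ref{perturb} are unit vectors and the densities $|\psi|^2$ integrate to $1$---so the stated bound holds in the intended setting. It is worth flagging that the constant cannot hold for all of $L^2(X)$ without this restriction: rescaling $\psi$ and $\tilde\psi$ by a common large factor makes the left-hand side grow quadratically while the right-hand side grows only linearly. Thus the heart of the matter is simply recognizing that $2=\|\psi\|_2+\|\tilde\psi\|_2$ under the operative unit-norm convention; the remaining estimates are routine applications of Cauchy--Schwarz and monotonicity of the integral.
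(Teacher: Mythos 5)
Your proof is correct, and it takes a genuinely different route from the paper's. The paper works at the level of norms: it rewrites $\int_F|\psi|^2=\|\chi_F\psi\|_2^2$, factors the difference of squares of the two truncated norms via $\left|a^2-b^2\right|=(a+b)\left|a-b\right|\leq 2\left|a-b\right|$, and then applies the reverse triangle inequality $\left|\,\|\chi_F\psi\|_2-\|\chi_F\tilde\psi\|_2\right|\leq\|\chi_F(\psi-\tilde\psi)\|_2\leq\|\psi-\tilde\psi\|_2$. You instead factor pointwise, writing $|\psi|^2-|\tilde\psi|^2=(\psi-\tilde\psi)\overline\psi+\tilde\psi\,\overline{(\psi-\tilde\psi)}$, and run Cauchy--Schwarz on each piece; the reverse triangle inequality never appears. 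The two arguments are of comparable length, and each makes the provenance of the constant transparent in its own way: yours exhibits $2$ as $\|\psi\|_2+\|\tilde\psi\|_2$, while the paper's exhibits it as a bound on $\|\chi_F\psi\|_2+\|\chi_F\tilde\psi\|_2$ (a slightly sharper, $F$-localized intermediate quantity). Your closing observation is a genuine catch that applies to the paper as well: the lemma as literally stated for all of $L^2(X)$ is false by the scaling argument you give, and the paper's own step $\left|a^2-b^2\right|\leq 2\left|a-b\right|$ silently assumes $a+b\leq 2$, i.e., that $\psi$ and $\tilde\psi$ have norm at most $1$. Since the lemma is only ever applied to elements of $L^2(X)_1$ (and, after transfer, of ${}^*L^2(\bb R^{3N})_1$), both proofs are sound in context, but the statement should really carry the unit-sphere (or norm at most one) hypothesis explicitly, exactly as you flag.
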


\begin{proof}
We calculate
\begin{alignat}{2}
\left|\int_F |\psi|^2-\int_F |\tilde\psi|^2\right|&=\left|\|\chi_F\psi\|_2^2-\|\chi_F\tilde\psi\|_2^2\right| \notag \\ \notag
    &\leq 2\left|\|\chi_F\psi\|_2-\|\chi_F\tilde\psi\|_2\right|\\ \notag
    &\leq 2\|\chi_F(\psi-\tilde\psi)\|_2\\ \notag
    &\leq 2\|\psi-\tilde\psi\|_2.
\end{alignat}
\end{proof}

% There is a lot of flexibility in this construction, namely in the choice of the function $\theta$.  We can make $\theta$ bounded and even a Schwarz function if that should help later on.

Now let $\psi\in L^2(\bb R^{3N})_1$ denote the wave function at some initial time $t_0$ of $N$ particles in $\bb R^3$.  Presuming that $\psi$ belongs to the domain of the Hamiltonian operator $\hat H$, we then have that there is a solution $\psi(q,t):=U(t)\psi(q)$ of the Schr\"odinger equation for which $\psi(q,t_0)=\psi(q)$, where $U(t):=e^{-it\hat H}$ is the unitary evolution operator. 

Viewing $\psi$ as a function $\psi:\bb R^{3N}\to \bb C$, we can consider its nonstandard extension $\starpsi:\starR^{3N}\to\starC$, which belongs to ${}^*L^2(\bb R^{3N})_1$.  By transfer, the nonstandard extension $\starpsi:\starR^{3N}\times \starR\to \starC$ of $\psi(q,t)$ is a solution of the internal version of the Schr\"odinger equation with ${}^*\psi(q,0)=\starpsi(q)$. 

We now apply the transferred version of Lemma \ref{perturb} to $\starpsi$ and a positive infinitesimal $\epsilon$ to obtain $\tilde{\psi}\in {}^*L^2(\bb R^{3N})_1$; we refer to $\tilde\psi$ as an \textbf{infinitesimal perturbation} of $\psi$.  By Lemma 2, given any internally Borel subset $F$ of $\starR^{3N}$, we have that the epistemic probabilities of finding the particle at time $t_0$ in $F$ with respect to the wave functions $\starpsi$ and $\tilde{\psi}$ are infinitely close.  In particular, if $F={}^*E$ for some standard Borel set $E$, we have that the epistemic probabilities of finding the particle in $E$ (with respect to $\psi$) and in ${}^*E$ (with respect to $\tilde\psi$) are infinitely close.

It is not immediately clear that there is a global solution $\tilde{\psi}(q,t)$ to the internal version of the Schr\"odinger equation for which $\tilde\psi(q,0)=\tilde\psi(q)$.  However, if the function $\theta$ appearing in the proof of Lemma 1 can be taken to belong to the domain of ${}^*\hat H$, then such a solution $\tilde{\psi}(q,t)$ does in fact exist.  For example, if the domain of $\hat H$ includes all Schwarz functions and $\psi$ itself is a Schwarz function, then this is indeed possible.

Presuming that the aforementioned global solution $\tilde{\psi}(q,t)$ does in fact exist, then by transferring the unitarity of the evolution of the wave function, item (1) of Lemma 1 remains true for $\starpsi(\cdot,t)$ and $\tilde\psi(\cdot,t)$ for all times $t\in \starR$, and thus by the transfer of Lemma 2, the epistemic probabilities of finding the particle configuration in any internally Borel subset of $\starR$ remain infinitely close for all times $t\in \starR$.  We can thus instead choose to model the given physical situation with the internal Hamiltonian ${}^*\hat H$ and the internal wavefunction $\tilde\psi$; this model now satisfies the postulates from the previous section at time $t_0$; if, in addition, $\hat H$ is reasonable, then the postulates remain valid for all time $t\in \starR$.

\section{Sanity check}

Ultimately, we want use the nonstandard formulation of Bohmian mechanics to show how one might make sense of the space invaders example. We will turn to this in the next section. However, to check that we are on the right track, we should ensure that passing from $\psi$ to $\tilde{\psi}$ as a model for our particle does not have any disastrous physical consequences for ``reasonable'' initial conditions. In particular, infinitely close epistemic probabilities does not by itself immediately entail that particle configurations remain infinitely close for all (standard) time (again, assuming reasonable initial conditions). That this desirable feature is indeed the case is the content of the following theorem.  

In this and the next section, we assume a single particle moving in one dimension.  Since configuration space is $\bb R$ in this situation, we replace the variable $q$ by the variable $x$.  

% Note also that in the above situation, we have that, for any measurable set $F$ and interpreting $\|\psi\|^2$ as a probability density, that
% $$|\bb P(x\in F)-\bb P(x'\in F)|<\epsilon.$$

% In what follows, we apply (the transfer of) Lemma 1 to $\psi(x)=\psi(x,0)\in {}^*L^2(\bb R)_1$ and some infinitesimal $\epsilon>0$ to get $\phi(x)$ (I've renamed $\psi'$ by $\phi$ to avoid confusion) which evolves to $\phi(x,t)$ with associated (internal) Bohmian motion $y(t)$.

\begin{thm}\label{maintheorem}
Suppose that $\hat H$ is a reasonable Hamiltonian and fix a function $\psi(x)\in L^2(X)_1$ and an infinitesimal perturbation $\tilde \psi(x)\in {}^*L^2(X)_1$ of $\psi$ as in the previous section.  Suppose that $x_0\in \bb R$ lies in the support of $\psi$.  Further suppose that there is a globally defined solution for the Bohmian trajectory $x(t)$ of the particle with initial wave function $\psi(x)$ and initial particle position $x(0)=x_0$ and that there is an internal globally defined solution for the internal Bohmian trajectory $\tilde{x}(t)$ of the particle with initial wave function $\tilde{\psi}(x)$ and initial particle position $\tilde{x}(0)=x_0$.  Then for every $t\in \bb R$, we have $x(t)\approx \tilde x(t)$. 
\end{thm}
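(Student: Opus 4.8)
The plan is to avoid comparing $\tilde x(t)$ with $x(t)$ directly, and instead to compare it with the internal Bohmian trajectory ${}^* x(t)$ determined by the nonstandard extension $\starpsi$ and the initial position $x_0$. By transfer, ${}^* x$ is the nonstandard extension of the standard solution $x$, and since a nonstandard extension agrees with the original function on standard inputs, ${}^* x(t)=x(t)$ for every standard $t$. Thus it suffices to show $\tilde x(t)\approx {}^* x(t)$ for all standard $t$. Both curves solve internal first-order ODEs with the common initial value $x_0$, namely the internal guiding equations $\dot{\tilde x}=v^{\tilde\psi}(\tilde x,t)$ and $\dot{{}^* x}=v^{\starpsi}({}^* x,t)$, where $v^{\phi}(x,t):=\frac{1}{m}\operatorname{Im}\frac{\partial_x\phi(x,t)}{\phi(x,t)}$ is the Bohmian velocity field.

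I would then reduce the statement to a Gr\"onwall-type continuous-dependence estimate. Fix a standard $T>0$ and regard $\tilde x$ as a solution of the $\starpsi$-equation perturbed by the forcing term $r(t):=v^{\tilde\psi}(\tilde x(t),t)-v^{\starpsi}(\tilde x(t),t)$. Transferring the classical theorem on the dependence of ODE solutions on their vector field gives, for $t\in[0,T]$,
$$
|\tilde x(t)-{}^* x(t)|\;\le\;\frac{\sup_{s\in[0,T]}|r(s)|}{L}\bigl(e^{LT}-1\bigr),
$$
where $L$ is an internal Lipschitz constant for $v^{\starpsi}(\cdot,t)$ on a neighborhood of the trajectory, uniform for $t\in[0,T]$. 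The right-hand side is infinitesimal as soon as (a) $L$ is finite and (b) $\sup_{[0,T]}|r|$ is infinitesimal. Claim (a) comes from reasonableness and transfer: reasonableness keeps the standard trajectory inside the support of $\psi(\cdot,t)$, so $\psi(x(t),t)\neq 0$ on the compact arc $\{(x(t),t):t\in[0,T]\}$; assuming $\psi$ is smooth (e.g.\ Schwartz), $v^{\psi}$ is then $C^1$ with finite Lipschitz constant in a standard neighborhood of that arc, and transferring this bound yields a finite internal $L$ valid along ${}^* x$ and, by a standard continuation argument, along $\tilde x$.

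The main obstacle is claim (b): passing from the infinitesimal $L^2$-closeness of the wave functions to the infinitesimal closeness of the nonlinear velocity fields, which involve a derivative and a quotient. Here I would use the explicit form of the perturbation in Lemma~\ref{perturb}. At $t=0$ one has $\tilde\psi=\sqrt{1-\delta}\,\starpsi$ on the set $\{\psi\neq 0\}$, so the scalar cancels in the logarithmic derivative and $v^{\tilde\psi}(\cdot,0)=v^{\starpsi}(\cdot,0)$ there exactly. For $t\neq 0$, unitarity gives $\tilde\psi(\cdot,t)=\sqrt{1-\delta}\,\starpsi(\cdot,t)+\eta_t$, where $\eta_t:={}^*U(t)\theta$ satisfies $\|\eta_t\|_2=\sqrt{\delta}\approx 0$ uniformly in $t$.

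The delicate point is that $L^2$-smallness of the correction $\eta_t$ does not control its pointwise values or its first derivative, which is exactly what the velocity field sees. I would close this gap by arranging in the construction of Lemma~\ref{perturb} that $\theta$ is a Schwartz function in the domain of $\hat H$, chosen so that $\eta_t$ has infinitesimal supremum norm and infinitesimal first derivative on the compact region swept out near the standard trajectory; combined with the lower bound on $|\psi(x(t),t)|$ furnished by reasonableness, the numerator $\partial_x\tilde\psi$ and denominator $\tilde\psi$ of $v^{\tilde\psi}$ are then infinitesimal perturbations of those of $v^{\starpsi}$ at $\tilde x(t)$, forcing $r(t)\approx 0$ uniformly on $[0,T]$. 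As $T$ is an arbitrary standard positive real and the time-reversed argument handles $t<0$, this yields $x(t)\approx\tilde x(t)$ for all standard $t$. The genuinely analytic content — and hence the step I expect to require the most care — is controlling the pointwise size and derivative of the dispersed perturbation $\eta_t$, since that is where the quantitative regularity beyond $L^2$ must be purchased.
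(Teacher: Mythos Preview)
Your Gr\"onwall/continuous-dependence approach is genuinely different from the paper's, but the step you yourself flag as ``the main obstacle'' is a real gap that your proposed fix does not close under the stated hypotheses. The theorem fixes an \emph{arbitrary} infinitesimal perturbation $\tilde\psi$ produced by Lemma~\ref{perturb}; the only quantitative information available is that $\|\starpsi-\tilde\psi\|_2$ is infinitesimal, and hence (by unitarity) $\|\starpsi(\cdot,t)-\tilde\psi(\cdot,t)\|_2$ is infinitesimal for all $t$. Your argument needs the forcing $r(t)=v^{\tilde\psi}(\tilde x(t),t)-v^{\starpsi}(\tilde x(t),t)$ to be infinitesimal, which is a pointwise statement about $\partial_x\phi/\phi$; $L^2$-smallness of $\eta_t={}^*U(t)\theta$ carries no information about its pointwise values or its spatial derivative. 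Your remedy---engineering $\theta$ so that $\eta_t$ is infinitesimally small in $C^1$ near the trajectory---both changes what is being proved (it restricts to specially prepared perturbations rather than the generic ones of Lemma~\ref{perturb}) and is itself unjustified: the ``reasonable'' hypothesis on $\hat H$ says only that trajectories stay in the support and gives no propagation bound for sup-norms or derivatives under $U(t)$. The same issue undercuts your claim~(a): a finite Lipschitz constant for $v^{\starpsi}$ near the trajectory requires $C^2$-type control on $\psi(\cdot,t)$, which the hypotheses do not supply.

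The paper sidesteps all of this by exploiting a conservation law peculiar to one-dimensional Bohmian motion (Fact~4): $\int_{-\infty}^{x(t)}|\psi(x',t)|^2\,dx'$ is constant in $t$. Comparing the two trajectories then reduces to comparing cumulative probabilities, where the available $L^2$-closeness (via Lemma~2) is exactly enough to force $\int_{\tilde x(t)}^{x(t)}|\starpsi(x',t)|^2\,dx'\approx 0$; reasonableness of $\hat H$ keeps $x(t)$ in the support of $\psi(\cdot,t)$, which then rules out any noninfinitesimal gap between $\tilde x(t)$ and $x(t)$. No pointwise or derivative information about the wave functions is ever used. The price, as the paper itself remarks, is that this argument is intrinsically one-dimensional; your ODE strategy would in principle generalize to higher dimensions, but only after securing the pointwise control that is currently missing.
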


% Doesn't postulate 5 require ensure that every configuration lies in the support of the wave function?

%I'm supposing that $\psi$ is a standard initial wavefunction satisfying the original Bohmian postulates.  Perhaps this should be stressed.

To prove the previous theorem, we will need the following fact about Bohmian motion in one dimension, which is an immediate consequence of the equivariance of the probability distributions and the fact that Bohmian motion in one dimension is order preserving; see equation (3) of Goldstein (1999):

\begin{fact}
For every $t\in \bb R$, we have
$$\int_{-\infty}^{x(t)}|\psi(x',t)|^2dx'=\int_{-\infty}^{x(0)}|\psi(x',0)|^2dx'.$$
\end{fact}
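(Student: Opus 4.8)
The plan is to show that the quantity $F(t) := \int_{-\infty}^{x(t)}|\psi(x',t)|^2\,dx'$ is constant in $t$ by differentiating it and checking that the derivative vanishes identically. Write $\rho(x,t) := |\psi(x,t)|^2$ for the Born density and let $j(x,t) := \tfrac{1}{m}\operatorname{Im}\bigl(\psi^*(x,t)\,\partial_x\psi(x,t)\bigr)$ be the associated probability current, chosen so that the guiding equation (rule~3II) reads exactly $\dot x(t) = j(x(t),t)/\rho(x(t),t)$ along the Bohmian trajectory. The Schr\"odinger equation then yields the one-dimensional continuity equation $\partial_t\rho + \partial_x j = 0$; this is precisely the analytic content of equivariance, since it is what guarantees that a density equal to $\rho$ at one instant is transported to $\rho$ at every later instant by the velocity field $j/\rho$.

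First I would apply the Leibniz rule for an integral with a moving upper limit:
$$F'(t) = \rho(x(t),t)\,\dot x(t) + \int_{-\infty}^{x(t)} \partial_t\rho(x',t)\,dx'.$$
By the guiding equation the boundary term equals $\rho(x(t),t)\cdot\dot x(t) = j(x(t),t)$. For the remaining integral I would substitute the continuity equation and integrate in $x'$:
$$\int_{-\infty}^{x(t)} \partial_t\rho\,dx' = -\int_{-\infty}^{x(t)} \partial_{x'} j\,dx' = -\bigl(j(x(t),t) - \lim_{x'\to-\infty} j(x',t)\bigr) = -j(x(t),t),$$
where I use that $j(x',t)\to 0$ as $x'\to-\infty$, which holds because $\psi(\cdot,t)\in L^2(\bb R)$ forces $\psi$ and hence $j$ to decay at infinity. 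The two contributions cancel, so $F'(t)=0$; integrating from $0$ to $t$ gives $F(t)=F(0)$, which is the asserted identity.

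In the spirit of the hint, one can instead argue conceptually via order preservation. Because the one-dimensional velocity field $j/\rho$ is single-valued, distinct Bohmian trajectories are integral curves of a common vector field and therefore cannot cross, so any configuration starting to the left of $x_0$ remains to the left of $x(t)$ for all $t$. The measure of the set of configurations lying to the left of the tracked trajectory is thus invariant in time; by equivariance that measure is computed at each instant by $\int_{-\infty}^{x(t)}\rho(x',t)\,dx'$, and its invariance is exactly the claim. This is the reading behind equation~(3) of Goldstein (1999).

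The main obstacle is analytic rather than conceptual: justifying the differentiation under the integral sign and the vanishing of $j$ at $-\infty$ requires sufficient regularity and decay of $\psi(\cdot,t)$, and the velocity field $j/\rho$ is singular on the nodal set $\{\rho = 0\}$. Along the trajectory itself this causes no trouble, since the hypothesis that $x_0$ lies in the support of $\psi$ together with reasonableness of $\hat H$ keeps $x(t)$ inside the support where $\rho>0$; the delicate point is controlling the integrand near nodes of $\psi$ interior to the integration region, which is absorbed into the standard smoothness and integrability hypotheses on Schr\"odinger solutions underlying the cited equivariance result.
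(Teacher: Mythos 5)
Your proposal is correct, and your second, conceptual argument is exactly the paper's proof: the paper dispatches this fact in a single line, as an immediate consequence of equivariance together with the order-preserving character of one-dimensional Bohmian motion, citing equation~(3) of Goldstein (1999) --- precisely the non-crossing-of-trajectories reading you give. What you add beyond the paper is the analytic content behind that citation: the Leibniz-rule computation showing $F'(t)=0$, with the boundary term $\rho(x(t),t)\,\dot x(t)=j(x(t),t)$ cancelling against $-\int_{-\infty}^{x(t)}\partial_{x'}j\,dx'$ via the continuity equation $\partial_t\rho+\partial_x j=0$. This buys a self-contained derivation where the paper outsources the work to an external reference, at the cost of the regularity hypotheses you rightly flag (differentiation under the integral, behavior at nodes of $\psi$).

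One inaccuracy worth correcting: membership of $\psi(\cdot,t)$ in $L^2(\bb R)$ does \emph{not} by itself force pointwise decay of $\psi$, hence not of $j$, as $x'\to-\infty$; there are square-integrable functions with no pointwise limit at infinity. To kill the boundary term at $-\infty$ you need something stronger --- for instance $\psi(\cdot,t)$ in $H^1$, so that $j\in L^1$ by Cauchy--Schwarz and the fundamental theorem of calculus applies to $j$, or Schwartz-class regularity of the sort the paper itself invokes elsewhere. Since you explicitly absorb such analytic issues into the standing hypotheses underlying the cited equivariance result, this is a misattributed justification rather than a gap in the argument, but as written the clause ``which holds because $\psi(\cdot,t)\in L^2(\bb R)$'' is false and should be replaced by the appropriate regularity assumption.
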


\begin{proof}[Proof of Theorem \ref{maintheorem}]
Let $r:=\int_{-\infty}^{x(0)}|\starpsi(x')|^2dx'$ and $s:=\int_{-\infty}^{x(0)}|\tilde\psi(x')|^2dx'$ (internal integrals).  By the transfer of Lemmas 1 and 2, we have that $r\approx s$.  Fix $t\in {}^*\bb R$.  Without loss of generality assume that $\tilde x(t)\leq x(t)$.  By the transfer of Fact 4, we have
$$r=\int_{-\infty}^{x(t)}|\starpsi(x',t)|^2dx'=\int_{-\infty}^{\tilde x(t)}|\starpsi(x',t)|^2dx'+\int_{\tilde x(t)}^{x(t)}|\starpsi(x',t)|^2dx'.$$ Now $\int_{-\infty}^{\tilde x(t)}|\starpsi(x',t)|^2dx'\approx \int_{-\infty}^{\tilde x(t)}|\tilde\psi(x',t)|^2dx'=s$, the latter equation following from the transfer of Fact 4.  It follows that $$\int_{\tilde x(t)}^{x(t)}|\starpsi(x',t)|^2dx'\approx 0.$$ If, towards a contradiction, $x(t)\not\approx \tilde x(t)$, there is some standard $z$ with $\tilde x(t)<z<x(t)$ and we have
$$\int_z^{x(t)}|\starpsi(x',t)|^2dx'\leq \int_{\tilde x(t)}^{x(t)}|\starpsi(x',t)|^2dx'\approx 0,$$ whence $\int_z^{x(t)}|\psi(x',t)|^2dx'=\int_z^{x(t)} |\starpsi(x',t)|^2dx'=0$.  Since $x(0)$ was assumed to belong to the support of the wave function, the same is true of $x(t)$ since $\hat H$ is reasonable, yielding a contradiction.
\end{proof}

\begin{remark}
The proof of Theorem 3 critically uses Fact 4, which is special to Bohmian motion in one dimension.  It would be desirable to extend the previous theorem to higher dimensions, but without something along the lines of Fact 3 holding in general, it is not clear to us how to proceed.
\end{remark}

\section{Invaders revisited}

We now revisit the space invaders example from Section~\ref{invader}. We will start with the reverse space invader case.

We considered a one-dimensional universe with an initial wave function $\psi(x)\in L^2(\bb R)$ whose support was contained in $[0,1]$ and an initial particle position $x(t_0)=2$. We now consider an infinitesimal perturbation $\tilde\psi(x)$ of $\psi(x)$ which has full wave function support in accord with Postulate~5. Under the perturbation, it is almost certain that the particle will be found in region $[0,1]$ on postulate~${}^*$4, but there is now an infinitesimal probability of finding the particle outside this region. In particular, $|\tilde\psi(x)|^2$ is a positive infinitesimal in the neighborhood of $x(t_0)$, indicating that it is extremely unlikely, but not impossible, for the initial position of the particle to be in this region. For simplicity, assuming that the motion of the particle is free and assuming that our initial wave function belongs to the domain of the usual Hamiltonian for a free particle (which is reasonable), then we can assume that our infinitesimal perturbation $\tilde\psi$ was obtained by perturbing our initial wave function $\psi$ by an infinitesimal Schwarz function and consequently conclude that there is a solution $\tilde\psi(x,t)$ to the internal version of the Schr\"odinger equation for which $\tilde\psi(x,t_0)=\tilde\psi(x)$.

Now, by transferring the usual consequences of the equations for Bohmian mechanics, we find that, after (say) one second, the particle has a well-defined position $x(t_1)$. Furthermore, as one might expect, for any $x\in \bb R$, we have that $\int_x^\infty |\tilde\psi(x',t_1)|^2dx'\approx 0$ whence, by Fact 4, it follows that $x(t_1)$ is a positive infinite element of $\starR$ and hence has a perfectly definite position represented by an infinite hyperreal number.

Note that here the exact time that one considers the position of the particle will matter as it will continue to move to the right. Further, the hyperfinite position of the particle $x(t_1)$ after one second will depend on the precise nature of the infinitesimal perturbation of the original wave function. Given concrete formulae for the infinitesimal perturbation, it is in principle possible to transfer the formulae governing the usual Bohmian evolution to predict the exact nonstandard location of the particle at any time $t_1$.

Importantly, the space invader part of the story is also now perfectly coherent. Wherever the particle ends up at time $t_1$, one can run the dynamics in reverse and obtain that the particle will indeed arrive back at $x=2$ after one second. 

In the case of the reverse space invader, we have allowed that the particle moves beyond any finite location, but we are still able to preserve our intuitions regarding the conservation of such properties as mass and charge since the particle remains at a perfectly well-defined spatial location. We are now also able to provide a coherent account of the motion of the space invader. The full story preserves both time-reversal symmetry and determinism.

The resent formulation of Bohmian mechanics allows one to represent a broader range of physical possibilities than the standard theory and to describe what would happen should those possibilities be realized. While we require that the configuration be in a region with at least infinitesimal wave function support, that the theory can handle space-invader type stories provides a concrete example of the sense in which it is both more expressive and robust than the standard formulation of the theory. In this regard, one might also note that, even with the assumption of reasonable Hamiltonians, the theory allows for unbounded energies. Its representational strength is further illustrated by the fact that it allows one to do things like consider Dirac-like distributions modeled as infinitely tall and infinitely thin rectangles and then say how a corresponding wave function might evolve in concert with the configuration.

\section{Conclusion}

The present formulation of Bohmian mechanics illustrates how one might allow for a broad range of physical states and still keep track of particle trajectories. Infinitesimal probability densities allow one to approximate any classical wave function, even those with regions of zero wave function support. And hyperreal configurations allow one to track the trajectory of particles that might otherwise move beyond any classical spacial location. As a result, one can tell perfectly deterministic space-invader type stories.

While the present theory can handle a much broader range of physical situations than the standard formulation, we are still assuming a relationship between the wavefunction and configuration and placing a constraint on physically possible Hamiltonians. Namely, Postulate 5 requires that there be at least infinitesimal wave function support for the configuration at an initial time, and if the Hamiltonian is reasonable, this preserves such support under the dynamics.

If these two conditions are satisfied, then one both captures the standard empirical predictions of Bohmian mechanics and ensures a deterministic evolution on the present formulation. We take the fact that these conditions allow one to tell space-invader type stories to provide a striking illustration of how weakly they constrain physical possibility in the context of the theory.

\newpage

\begin{center}
\large{Bibliography}
\end{center}

\vspace{.5cm}
\noindent
Barrett, Jeffrey A. (2021) ``Situated Observation in Bohmian Mechanics,” \emph{Studies in the History and Philosophy of Science}, Volume 88, August 2021, Pages 345-357.

\vspace{.5cm}
\noindent
Barrett, Jeffrey A. (2020) \emph{The Conceptual Foundations of Quantum Mechanics}, Oxford: Oxford University Press.

\vspace{.5cm}
\noindent
Barrett, Jeffrey A. (2000) ``The Persistence of Memory: Surreal Trajectories in Bohm’s Theory,'' \emph{Philosophy of Science} 67(4): 680--703.

\vspace{.5cm}
\noindent
Barrett, Jeffrey A.\ and Isaac Goldbring (2022) ``Everettian Mechanics with Hyperfinitely Many Worlds'' Published online in \emph{Erkenntnis} 10 July 2022: \\ https://doi.org/10.1007/s10670-022-00587-x.

\vspace{.5cm}
\noindent
Bell, John S.\ (1987) \emph{Speakable and Unspeakable in Quantum Mechanics}, Cambridge: Cambridge University Press.

\vspace{.5cm}
\noindent
Bohm, David (1952) ``A Suggested Interpretation of Quantum Theory in Terms of `Hidden Variables','' Parts I and II, {\em Physical Review} 85: 166--179, 180--193. In J.\ A.\ Wheeler and W.\ H.\ Zurek (eds.) (1983) \emph{Quantum Theory and Measurement}. Princeton University Press: Princeton, NJ, pp.\ 369--96.

\vspace{.5cm}
\noindent
Diaconis, Persi and Brian Skyrms (2018) \emph{Ten Great Ideas about Chance}, Princeton and Oxford: Princeton University Press.

\vspace{.5cm}
\noindent
Earman, John (1986) ``A Primer on Determinism,'' Springer: Dordrecht.

\vspace{.5cm}
\noindent
D\"urr, Detlef, Sheldon Goldstein, Nino Zangh\`i (1992) ``Quantum Equilibrium and the Origin of Absolute Uncertainty.'' \emph{Journal of Statistical Physics} 67:843--907.

\vspace{5mm}
\noindent
Ghirardi, G.\ C., Rimini, A., and Weber, T.\ (1986) ``Unified dynamics for microscopic and macroscopic systems,'' \emph{Physical Review D} 34: 470--491. 

\vspace{5mm}
\noindent
Goldstein, S. (1999) ``Absence of Chaos in Bohmian Dynamics,'' \emph{Physical Review E} 60: 7578--7579.

\vspace{.5cm}
\noindent
Wontner, N.\ J.\ H.\ (2019) \emph{Non-Standard Analysis}. 
Dissertation is submitted for the degree of Master of Philosophy, Christ’s College Faculty of Philosophy University of Cambridge 17th June 2019.

\end{document}